\newtheorem{theorem}{Theorem}
\newtheorem{definition}{Definition}
\newtheorem{lemma}{Lemma}
\begin{document}
%
\title{IMF: Iterative Max-Flow for Node Localizability Detection in Barycentric Linear Localization}
%
%
%
%

\author{Haodi~Ping,  Yongcai~Wang, 
        and~Deying~Li
\IEEEcompsocitemizethanks{\IEEEcompsocthanksitem The authors are with School of Information, Renmin University of China, Beijing, P.R.China, 100872.\protect\\}
\thanks{E-mail: \{haodi.ping, ycw, deyingli\}@ruc.edu.cn}
}

\IEEEtitleabstractindextext{%
\begin{abstract}

Determining whether nodes can be uniquely localized, called localizability detection, is a concomitant problem of network localization. Localizability under traditional Non-Linear Localization (NLL) schema has been well explored, whereas localizability under the emerging Barycentric coordinate-based Linear Localization (BLL) schema has not been well touched. In this paper, we investigate the deficiency of existing localizability theories and algorithms in BLL, and then propose a necessary condition and a sufficient condition for BLL node localizability. Based on these two conditions, an efficient iterative maximum flow (IMF) algorithm is designed to identify BLL localizable nodes. Finally, our algorithms are validated by both theoretical analysis and experimental evaluations. 

\end{abstract}

\begin{IEEEkeywords}
Localizability Detection, Max-Flow, Barycentric Coordinate, Linear Localization, Wireless Networks 

\end{IEEEkeywords}}

\maketitle

\IEEEdisplaynontitleabstractindextext

%
\IEEEpeerreviewmaketitle

\section{Introduction}\label{sect:intro}

With the rapid development of communication technologies such as fifth-generation (5G) networks, and the growing popularity of smart objects such as sensors, robots, unmanned aerial vehicles (UAVs), there is potential for networking heterogeneous devices to provide services. In these applications, the geographical locations of the objects are one of the most fundamental knowledge \cite{Guo2020,Nguyen2020,DEMIANE2020107378,bartoletti20185g}. Thus \emph{Network Localization (NL)}, which calculates the geographical node locations using the inter-node measurements and anchor information, has attracted tremendous both academic interest \cite{guo2020simultaneous,yu2019novel,zhao2019convolutional} and commercial concerns\cite{fira,tsingoal,infsoft}. 

Existing NL algorithms can be broadly divided into two categories. Traditionally, NL is modeled as a non-linear least-square optimization problem, whose objective is to minimize the sum square errors between the estimated distances (calculated by node locations) and the measured distances. We call these algorithms \emph{Non-Linear Localization} (NLL) algorithms. NLL can be solved by multi-dimensional scaling \cite{shang2004improved},  graph optimization \cite{PingHGO,bhat2020optimization}, and component stitching \cite{Ping2020,Sun2018}, etc. However, NLL algorithms need to be well initialized, otherwise, it is easy to fall into a local minimum. NLL is also not efficient and is difficult to be distributed for it is essentially a quadratic equation.  To overcome the defects of NLL algorithms, a more recent Barycentric coordinate-based Linear Localization (BLL) framework has emerged. In BLL, each node represents its location as a linear combination of locations of neighbors through the barycentric coordinates. Then the network localization problem reduces to a linear model, which is computationally efficient, easy to be distributed, and is guaranteed convergence under broad initialization \cite{Khan2015,linear_survey,DILOC,ECHO,Tecchio2019NDimensional}. 

Along with the evolution of NL algorithms, there is a concomitant problem called \emph{localizability detection}. \textbf{Network localizability} characterizes whether an entire network is uniquely localizable given the distance and anchor constraints while \textbf{node localizability} answers whether a specific node is localizable. Prior to calculating node locations, localizability is of great significance to the feasibility of NL algorithms, which helps NL algorithms avoid wrong location answers\cite{xia2016localizability}. The localizability property also provides guidelines for network topology control, node deployment optimization, and event detection. 

The network localizability identification problem was usually researched through graph rigidity theories\cite{eren2004rigidity,hendrickson1992conditions,jackson2005connected}. Based on the rigidity theory, the necessary and sufficient condition for network localizability of NLL methods has been proposed and a polynomial algorithm for localizability testing has been designed\cite{jackson2005connected}. However, considering the general requirement of deployment cost and limited perception radius for energy efficiency, practical networks are usually sparsely deployed. For sparse networks, the network localizability can hardly be satisfied, because it is inevitable that some sparsely connected nodes are unlocalizable. 
The previous study has shown that real networks tend to be not entirely localizable, but a certain portion of the nodes can still be uniquely localized\cite{goldenberg2005network}. Thus, it is more meaningful to find out how many nodes can be uniquely localized and which are them. This problem is called \emph{node localizability detection}. 
Herein, we should be clear about the concept of \emph{network localizability,} and \emph{node localizability}. Detecting localizable nodes deserve great research attention in both NLL and BLL.   

The research on node localizability includes the node localizability conditions and the node localizability detection algorithms. In NLL, Yang et. al.\cite{yang2011understanding} proposed an RR3P condition, which is so far the best sufficient condition for node localizability. But up to now, there is no known necessary and sufficient condition for node localizability. For the node localizability detection algorithms, a node localizability detection algorithm was designed according to the RR3P condition. Moreover, several incremental detection algorithms have been proposed according to the rigidity theory\cite{eren2004rigidity,Yang2009,wu2017triangle}. In BLL, Diao et.al. proposed the sufficient and necessary condition for network localizability\cite{ECHO}. But the node localizability condition and node localizability detection algorithms for BLL are still open.

Non-awareness of node localizability is a critical problem in BLL, because each node's location is iteratively updated based on neighbors' locations in the BLL routine. If non-localizable nodes participate in the location propagation process, their wrong locations will impact the locations of the localizable nodes. Fig.~\ref{fig:show_LABLL_2d} shows an example of locating 12 nodes including 2 non-localizable nodes using ECHO\cite{ECHO}, a representative BLL localization algorithm. Let $\mathbf{p}_i$ denote the ground truth location of node $v_i$, and $\mathbf{\hat{p}}_i$ denote the estimated location using ECHO. Fig.~\ref{fig:NL_ECHO} shows the results when all the 12 nodes participate in BLL, where the asterisk represents $\mathbf{\hat{p}}_i$ and the circle represents $\mathbf{p}_i$. How the localization errors ($||\mathbf{\hat{p}}_i-\mathbf{p}_i||_2$) of the nodes vary with the iteration times is shown in Fig.~\ref{fig:err_echo_2d}. The results show that even after $3\times 10^5$ iterations, no node can converge to the correct location if localizability is not considered. Therefore, detecting localizable nodes and excluding non-localizable nodes are critically important for BLL. 

\begin{figure}[ht!]
	\centering
	\subfigure[ECHO in 2D.]{
		\begin{minipage}[t]{0.46\linewidth}
			\centering
			\includegraphics[width=.99\textwidth]{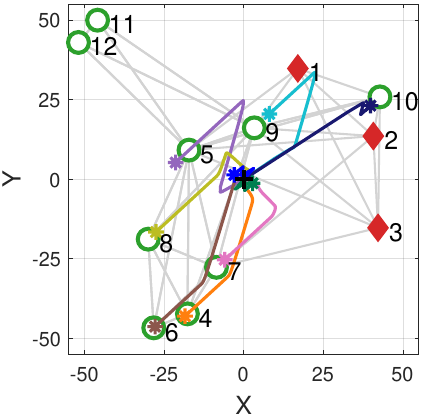}
			\label{fig:NL_ECHO}
		\end{minipage}
	}
\centering
	\subfigure[Error evolution]{
		\begin{minipage}[t]{0.46\linewidth}
			\centering
			\includegraphics[width=0.99\textwidth]{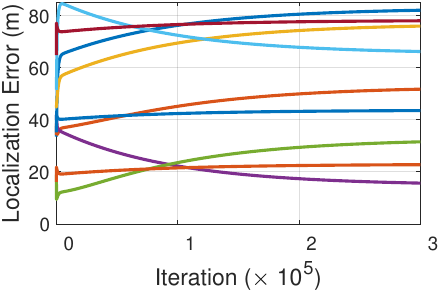}
			\label{fig:err_echo_2d}
		\end{minipage}
	}
	\caption{The convergence traces, where `$+$' represents the initialization of $\mathbf{\hat p}_i$, `$*$' represents the converged $\mathbf{\hat p}_i$, and `o' represents the ground truth location $\mathbf{p}_i$. (a) ECHO converges to wrong locations for all nodes. (b)  The evolution of localization error ($||\mathbf{\hat{p}}_i-\mathbf{p}_i||_2$) w.r.t. iteration rounds.}
	\label{fig:show_LABLL_2d}
\end{figure}

In this paper, we concentrate on localizability detection in BLL and its applications. The main contributions are as follows.

\begin{itemize}
	\item Firstly, a necessary condition and a sufficient condition are designed for BLL node localizability. 
	\item Then so far the first network localizability test algorithm for BLL is designed, whose key idea is to transform the detection of disjoint paths into calculating the max-flow. Moreover, an Iterative Max-Flow (IMF) method is designed to detect the BLL localizable nodes. 
	\item The time and space complexity of IMF are studied and the applicability of the IMF algorithms is verified in various aspects including its extension to NLL, extension to higher dimensions. 
\end{itemize}

The remainder of this paper is as follows. In Section~\ref{sect:preliminary}, an overview of the evolution of NL algorithms and the localizability theories is presented. In Section~\ref{sect:conditions}, a necessary condition and a sufficient condition for node localizability are proposed. In Section~\ref{sect:algorithms}, testing algorithms for network localizability and an iterative maximum flow (IMF) algorithm for identifying localizable nodes are proposed. Analysis and possible extensions of the proposed algorithms are given in Section~\ref{sect:analysis}. Algorithms are evaluated in Section~\ref{sect:evaluation}. Finally, this paper is concluded with discussion of future work in Section~\ref{sect:conclusion}.

\section{Preliminary and Key Related Work} \label{sect:preliminary}
The problem formulation, review of network localizability conditions in NLL and BLL, review of node localizability conditions in NLL and BLL, existing localizability identification algorithms, and an overview of our framework are introduced in this section.

\subsection{Notations and Problem Formulation}

For a network $\mathcal{G=\{V,E, \mathbf{d}\}}$, let $\mathcal{V}$ = $\mathcal{A} \cup \mathcal F$ denote the node set, which is composed of $m$ \emph{anchor nodes} (i.e., $\mathcal{A}=\{v_{m+1},\cdots,v_{m+n}\}$) and $n$ \emph{free nodes} (i.e., $\mathcal{F}=\{v_1,\cdots,v_m\}$). Locations of anchors are known, denoted by $\mathbf{P}_{\mathcal{A}}=\{\mathbf{p}_1,\cdots,\mathbf{p}_m\}$ while locations of free nodes are unknown, denoted by  $\mathbf{P}_{\mathcal{F}}=\{\mathbf{p}_{m+1},\cdots,\mathbf{p}_{m+n}\}$. $\mathbf d$ represents the distance matrix among nodes. For a node $v_i$, it can obtain the distance $d_{ij}$ to another node $v_j$ if $||\mathbf{p}_i - \mathbf{p}_i||_2\leq R$, where $R$ is the perception radius. $(i,j)\in\mathcal{E}$ if  $d_{ij}$ can be obtained. The neighbors of $v_i$ is represented as $\mathcal{N}_i$, $v_j\in \mathcal{N}_i$ if $(i,j)\in\mathcal{E}$. The problem is to detect which free nodes can be uniquely localized using the anchor and distance constraints when BLL methods are adopted. 

\subsection{BLL Methods, Localizability Conditions and Algorithms}



Range-based NL algorithms fall into two categories, i.e., NLL and BLL. In NLL, there are a wide range of algorithms to solve $\mathbf{P}_{\mathcal{F}}$ from the model of minimizing the sum square errors\cite{ji2013beyond,shang2004improved,so2007theory,kummerle2011g,Sun2018,Wang2018Formation,PingHGO,Ping2020}. The main defect of NLL is that it does not guarantee global optimality since the optimization model is quadratic. For a detailed introduction to NLL, please refer to reference \cite{Ping2020}. In this paper, we focus on BLL algorithms. 

\subsubsection{The Process of BLL}

In BLL, the basic idea is to represent a node's location as a linear combination of the locations of its neighbors using inter-node distances. Then, the network localization problem becomes a linear model. The detailed procedure for converting the NL problem to linear form is as follows.

In $\Re^2$, for a node $v_i$ and neighbors $v_j$, $v_k$, $v_l$ $\in\mathcal{N}_i$, the linear combination is: 

\begin{equation}
	\begin{aligned}
		&{\mathbf{p}}_i = a_{ij}{\mathbf{p}}_j + a_{ik}{\mathbf{p}}_k + a_{il}{\mathbf{p}}_l, 
	\end{aligned}
	\label{equ:barycentric_representation}
\end{equation}
where $\{a_{ij},a_{ik},a_{il}\}$ is called the \emph{barycentric coordinate} of $v_i$ introduced by $M\ddot{o}bius$\cite{Mobius1827}. Denote $\{v_j,v_k,v_l\}$ as $\Theta$. The barycentric coordinate is calculated as:
\begin{equation}
	\left\{
	\begin{aligned} 
		a_{ij} & = D(\Theta; v_i, v_k, v_l)/D(\Theta,\Theta),\\
		a_{ik} & = D(\Theta; v_j, v_i, v_l)/D(\Theta,\Theta), \\
		a_{il} & = D(\Theta; v_j, v_k, v_i)/D(\Theta,\Theta).
	\end{aligned}\right.
	\label{equ:barycentric_representation_bicm}
\end{equation}
$D(\cdot)$ represents the \emph{Cayley-Menger bideterminant}, whose input is two node sets with equal cardinality and the output is a signed scalar\cite{Tecchio2019NDimensional}:
\begin{equation}
	\begin{aligned}
		& D(\mathbf{p}_1,\cdots,\mathbf{p}_k;  \mathbf{q}_1,\cdots,\mathbf{q}_k)=  \\
		& 2(-\frac{1}{2})^k   \begin{vmatrix}
			0 		& 1 		 & 1 		  & \cdots & 1			\\
			1 		& D(\mathbf{p}_1,\mathbf{q}_1) & D(\mathbf{p}_1,\mathbf{q}_2) & \cdots & D(\mathbf{p}_1,\mathbf{q}_k) \\
			1 		& D(\mathbf{p}_2,\mathbf{q}_1) & D(\mathbf{p}_2,\mathbf{q}_2) & \cdots & D(\mathbf{p}_2,\mathbf{q}_k)	\\
			\vdots	&\vdots 	 & \vdots	  & \ddots & \vdots 	\\
			1 		& D(\mathbf{p}_k,\mathbf{q}_1) & D(\mathbf{p}_k,\mathbf{q}_2) & \cdots & D(\mathbf{p}_k,\mathbf{q}_k)
		\end{vmatrix}, 
	\end{aligned}	
	\label{equ:bideterminant}
\end{equation}
where $D(\mathbf{p}_i,\mathbf{q}_i)$ is the squared distance $d_{ij}^2$ between the nodes $\mathbf{p}_i$ and $\mathbf{q}_i$. Then, the linear representation of all node locations can form a linear system:
\begin{equation}
	\label{equ:linear_representation}
	\left[ {\begin{array}{*{20}{c}}
			{{{\mathbf{P}}_{\mathcal{A}}}} \\ 
			{{{{\mathbf{P}}}_{\mathcal{F}}}} 
	\end{array}} \right] = \left[ {\begin{array}{*{20}{c}}
			{\mathbf{I}}&{\mathbf{0}} \\ 
			{\mathbf{B}}&{\mathbf{C}} 
	\end{array}} \right]\left[ {\begin{array}{*{20}{c}}
			{{{\mathbf{P}}_{\mathcal{A}}}} \\ 
			{{{{\mathbf{P}}}_{\mathcal{F}}}} 
	\end{array}} \right].
\end{equation}
$\mathbf{A}=\left[ {\begin{array}{*{20}{c}}
		{\mathbf{I}}&{\mathbf{0}} \\ 
		{\mathbf{B}}&{\mathbf{C}} 
\end{array}} \right] \in\mathbb{R}^{(m+n)\times(m+n)}$ is the barycentric coordinates calculated by inter-node distances. Finally, the BLL problem is formulated as the following linear model:
\begin{equation}
	\label{equ:bll_problem}
	(\mathbf{I}-\mathbf{C})\mathbf{P}_{\mathcal{S}} = \mathbf{B}\mathbf{P}_{\mathcal{A}}.
\end{equation}
Thus, the general process of BLL is constructing the BLL model in (\ref{equ:bll_problem}) and solving $\mathbf{P}_{\mathcal{F}}$ from the model\cite{DILOC, khan2008distributed, ECHO,Tecchio2019NDimensional}. Note that the same construction manner also works in $\mathbb{R}^3$.  We then review the known network localizability conditions for NLL and BLL. 

\begin{table*}[ht]	
	\centering
	\caption{An overview of existing work about localizability. $\checkmark$:solved, $\times$: unsolved,  $-$: unreported.} 
	\label{table1}
	\begin{tabular}{|c|c|c|c|c|c|c|}
		\hline
		\multirow{2}{*}{} & \multicolumn{2}{c|}{Network Localizability}                                                         & \multicolumn{4}{c|}{Node Localizability}                                                           \\ \cline{2-7} 
		& \begin{tabular}[c]{@{}c@{}}Necessary and\\ Sufficient Condition\end{tabular} & \begin{tabular}[c]{@{}c@{}}Testing\\ Algorithm\end{tabular} & \begin{tabular}[c]{@{}c@{}}Necessary and\\ Sufficient Condition\end{tabular} & \begin{tabular}[c]{@{}c@{}}Necessary\\ Condition\end{tabular} & \begin{tabular}[c]{@{}c@{}}Sufficient\\ Condition\end{tabular} & \begin{tabular}[c]{@{}c@{}}Detection\\ Algorithm\end{tabular} \\ \hline
		NLL & $\checkmark$ & $\checkmark$  & $\times$  & $\checkmark$ & $\checkmark$ &   $\checkmark$                                                          \\ \hline
		BLL & $\checkmark$ & $-$  & $-$  & $-$ & $-$ &   $-$                                                             \\ \hline
	\end{tabular}
\end{table*}
\subsubsection{Network Localizability Conditions} 

In NLL, the network localizability problem is closely related to the graph rigidity\cite{eren2004rigidity,eren2016graph}. 
\begin{lemma}[NLL Network Localizability Condition\cite{eren2004rigidity}]
	A network $\mathcal{G}$ is localizable in 2D, if and only if it contains at least 3 anchors and it is global rigid (i.e., 3-connected and redundantly rigid).
	\label{lemma:loc_NLL}
\end{lemma}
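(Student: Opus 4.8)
The plan is to establish both directions of the equivalence by separating the two sources of ambiguity in any distance-based placement: the congruence freedom inherent in the edge-length constraints, and the residual freedom left after fixing the anchors. The single elementary fact I will lean on throughout is that an isometry of the Euclidean plane fixing three non-collinear points must be the identity; under the standard genericity assumption, three anchors are non-collinear, so this fact applies.

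For sufficiency (at least $3$ anchors together with global rigidity imply localizability), let $p$ be the true placement and let $q$ be any placement consistent with all measured inter-node distances and with the known anchor positions. By the definition of global rigidity, the framework $(\mathcal{G},q)$ is congruent to $(\mathcal{G},p)$, so there is a plane isometry $T$ (a composition of translation, rotation, and possibly reflection) with $q = T\circ p$. Since $q$ and $p$ coincide on the three anchors, $T$ fixes three non-collinear points and is therefore the identity, giving $q=p$. The placement is thus unique, which is exactly localizability.

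For necessity I would argue the contrapositive in two cases. If $\mathcal{G}$ has fewer than three anchors, the congruences preserving the anchor positions form a nontrivial set: two anchors leave a reflection across the anchor line, one anchor leaves a full rotation, and none leaves the entire Euclidean group. Applying any such nontrivial congruence to $p$ produces a distinct placement satisfying every constraint, so $\mathcal{G}$ is not localizable. If instead $\mathcal{G}$ is not globally rigid, there exists a placement $q$ realizing the same edge lengths that is not congruent to $p$; composing $q$ with the unique isometry that aligns its anchors to those of $p$ yields a valid localization distinct from $p$, again contradicting localizability.

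The step I expect to require the most care is this last one: I must ensure that the noncongruent realization guaranteed by the failure of global rigidity can be reconciled with the fixed anchor positions without collapsing back onto $p$. This relies on the genericity assumption, under which the anchor subconfiguration is itself rigid and the aligning isometry is uniquely determined. I would make this alignment precise and observe that the full argument, including the treatment of collinear and reflective edge cases, is carried out in \cite{eren2004rigidity}, whose reasoning this sketch follows.
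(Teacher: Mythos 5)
The paper itself offers no proof of this lemma: it is quoted verbatim from \cite{eren2004rigidity}, so your proposal has to stand on its own. Your sufficiency direction is sound (global rigidity forces any consistent placement to be congruent to the truth, and an isometry fixing three non-collinear anchor positions is the identity), and so is the fewer-than-three-anchors case of necessity. The gap is in your second necessity case, exactly the step you flagged. A failure of global rigidity of the measurement graph $\mathcal{G}$ hands you a realization $q$ with the same \emph{edge} lengths, but edge-length preservation says nothing about the distance between two anchors that are not adjacent in $\mathcal{G}$. If some anchor pair is non-adjacent, the anchor triple of $q$ need not be congruent to the anchor triple of $p$, and then the ``unique isometry that aligns its anchors to those of $p$'' simply does not exist. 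Genericity cannot repair this: the anchors' relative positions are pinned by their known coordinates, not by measured edges, so the ``anchor subconfiguration'' you call rigid is not constrained at all inside the framework $(\mathcal{G},q)$.

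Worse, with the lemma read literally (global rigidity of the bare graph $\mathcal{G}$), the implication you are trying to prove is false. Take three non-collinear anchors and a single agent adjacent to all three, with no anchor--anchor edges: the network is localizable by trilateration from the known anchor positions, yet the graph is $K_{1,3}$, a tree, which is not even rigid, let alone globally rigid. The statement of \cite{eren2004rigidity} is about the \emph{grounded graph}, in which every pair of anchors is additionally joined by an edge (their mutual distances are known because their positions are). That is the missing idea. With that reading your argument closes: $q$ then preserves all inter-anchor distances, the aligning isometry $T$ exists and is unique since the anchors are non-collinear, and $T\circ q\neq p$ because $T\circ q$ is congruent to $q$ while $q$ is not congruent to $p$. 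So you should either build the grounded graph into the statement or add the anchor--anchor edges explicitly before invoking the non-congruent realization; as written, the case-2 step fails.
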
 


In BLL, the network localizability condition is based on a \emph{generated graph} $\mathcal{G^A}$ associated with the barycentric representation, instead of $\mathcal{G}$.
\begin{definition}[Generated Graph, $\mathcal{G^A}$]
	Given $\mathcal{G=(V,E)}$ and the constructed $\mathbf{A}$ in (\ref{equ:bll_problem}), the generated graph $\mathcal{G^A=(V, E^A)}$ of $\mathcal{G}$ is defined as a graph with the same node set $\mathcal{V}$. An edge $(i,j)\in \mathcal{E^A}$ if $\mathbf{A}_{ij}\neq 0$. 
\end{definition}
Then, the BLL network localizability condition is designed based on $\mathcal{G^A}$\cite{ECHO}. 
\begin{lemma}[BLL Network Localizability Condition\cite{ECHO}]
	Using BLL algorithms, all free nodes in a generic graph $\mathcal {G}$ are localizable in 2D if every free node can find at least three node-disjoint paths to anchors through only edges in $\mathcal{G^A}$.
	\label{lemma:loc_BLL}
\end{lemma}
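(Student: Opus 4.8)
The plan is to reduce localizability to the unique solvability of the linear system induced by barycentric coordinates, and then to certify that solvability using the connectivity hypothesis. First I would make the barycentric representation explicit. By the definition of $\mathcal{G_A}$, every non-anchor node $i$ has three neighbors forming a triangle around it, so its true position $p_i$ admits an affine (barycentric) expression $p_i = \sum_j w_{ij}\, p_j$ over those neighbors, with $\sum_j w_{ij} = 1$ and every participating edge lying in $\mathcal{E_A}$. Stacking these identities over all free nodes, while holding the anchor coordinates fixed, produces a linear system $(I - B)\,P_F = C\,P_A$, where $B$ and $C$ carry the barycentric weights supported on $\mathcal{G_A}$. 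Node $i$ is localizable exactly when the block of this system that governs $p_i$ has a unique solution, i.e. when $I - B$ (restricted appropriately) is nonsingular.

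The second step is to turn the three node-disjoint paths into the algebraic nondegeneracy this requires. By Menger's theorem, three node-disjoint $\mathcal{G_A}$-paths from $i$ to the anchor set are equivalent to $3$-connectivity: no pair of nodes separates $i$ from all anchors. I would then transfer this combinatorial fact into a statement about the determinant of $I - B$ via a symbolic matching argument. Treating the weights as indeterminates, one expands $\det(I - B)$ as a signed sum over permutations and uses the disjoint-path structure to exhibit a single nonvanishing term---equivalently, a system of distinct representatives (a perfect matching) in the bipartite pattern of $\mathcal{G_A}$ that the three paths guarantee. Because $\mathcal{G}$ is generic, one nonzero term in the symbolic determinant forces the numeric determinant to be nonzero off a measure-zero set, so the generic configuration is nondegenerate and $I - B$ is invertible.

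Assembling the pieces, invertibility yields the unique solution $P_F = (I - B)^{-1} C\,P_A$, whose $i$-th coordinate is the recovered position of $i$; hence $i$ is localizable. I expect the principal obstacle to be the second step: rigorously converting node-disjointness into a nonzero permutation term while respecting the rank-$3$ (two spatial dimensions plus the sum-to-one constraint) nature of barycentric coordinates. The delicate points are handling paths that share anchor endpoints, incorporating the affine normalization $\sum_j w_{ij} = 1$ consistently into the matrix $B$, and checking that genericity excludes precisely the degenerate weight choices rather than an essential family of configurations. These details, not the high-level connectivity count, are where the real work lies.
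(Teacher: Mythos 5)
First, a point of reference: the paper does not prove this statement at all --- Lemma~\ref{lemma:loc_BLL} is quoted as a known result from the cited BLL literature, so there is no in-paper proof to compare against. Your proposal must therefore be judged on its own merits, and it has a genuine gap in its central step.

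The gap is in your second step, and it is fatal rather than merely ``delicate.'' You propose to treat the barycentric weights as free indeterminates, expand $\det(I-B)$ as a signed sum over permutations, and argue that the three node-disjoint paths supply one nonvanishing term, whence genericity gives invertibility. But observe that $B$ has zero diagonal (no node uses itself in its own barycentric expression), so the identity permutation contributes the term $1$ to $\det(I-B)$ for \emph{every} graph, regardless of connectivity. Under your criterion, the symbolic determinant is never identically zero, and your argument would ``certify'' localizability even for a network whose free nodes have no path to any anchor --- the argument proves too much, which means it proves nothing. The reason it fails is exactly the constraint you deferred: the weights are not free. They satisfy $\sum_j w_{ij}=1$ row by row and are determined by the node positions; on this constrained set the determinant \emph{can} vanish identically. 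For instance, if a set of free nodes has all of its barycentric mass on other free nodes (no anchor contributions), the corresponding block of $I-B$ has zero row sums, so the all-ones vector lies in its kernel and $\det(I-B)=0$ identically --- no genericity can rescue this. The connectivity hypothesis is precisely what rules such blocks out, so it must enter the nondegeneracy argument in an essential way, not as an afterthought. The standard repair, and the route taken in the BLL literature this lemma comes from, is to regard $\det(I-B)$ as a function of the \emph{configuration} (positions), use the three node-disjoint paths to construct one explicit witness configuration at which the matrix is nonsingular (e.g., by an inductive or path-by-path construction that grounds every free node to the anchors), and then invoke analyticity/genericity to conclude nonsingularity for almost all configurations. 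Your first and third steps (setting up $(I-B)P_F = CP_A$ and reading off uniqueness from invertibility) are fine; it is the permutation-term argument in the middle that must be replaced.
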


So network localizability for NLL can be checked based on the graph property of $\mathcal {G}$, while network localizability for BLL can be checked based on the property of $\mathcal{G^A}$.  We then review the node localizability conditions. 

\subsubsection{Node Localizability Conditions} 

To judge whether a specific node is localizable, the necessary and sufficient condition has not been derived yet for either NLL or BLL. But there exist several sufficient or necessary conditions. 

In NLL, Goldenberg et al. \cite{goldenberg2005network} proposed a necessary condition called \emph{3P condition}, i.e., a localizable node must have three node-disjoint paths to three distinct anchors. Yang et al. also derived a sufficient condition called \emph{RRT-3B condition}, i.e., a localizable node needs to be in a redundantly rigid component that is 3-connected and has three anchors. Then, Yang et al. \cite{yang2011understanding} proposed a tightened necessary condition called \emph{RR-3P} that a localizable node must be in a redundant rigid component in addition to satisfying the 3P condition. They also gave a weaker sufficient condition called \emph{RR3P condition} that the nodes belong to a redundantly rigid component containing at least three anchors, where there are three node-disjoint paths connecting the free nodes and three anchors. A detailed discussion about network localizability and node localizability can be seen in \cite{zhang2012theoretic}.

In BLL, the issue of node localizability has not been reported. The network localizability and node localizability detection algorithms are then reviewed.

\subsubsection{Network Localizability and Node Localizability Detection Algorithms}


To test whether a network is localizable using NLL, algorithms have been designed to verify global rigidity by integrating the detection of redundant rigidity \cite{laman1970graphs} and the detection of 3-connectivity\cite{hopcroft1973dividing,milkri1988new}.  

To detect localizable nodes in NLL, there are mainly three representative algorithms: 

(1) The most representative method is \textbf{trilateration protocol (TP)}\cite{eren2004rigidity}. In $\mathbb{R}^d$, TP first marks the anchors as localizable and free nodes as unlocalizable. Then, each unlocalizable node continually checks its neighbors and changes its state to localizable if it finds at least $d+1$ localizable neighbors. 
TP is simple and efficient, but TP misses localizable nodes that are on a \emph{geographical gap} or a \emph{geographical border}. Moreover, TP fails to startup if the anchors do not have common neighbors; 

(2) To identify nodes on a geographical gap or a border, Yang et al.\cite{Yang2009} proposed the \textbf{wheel extension (WE)} algorithm. The structure of the wheel graph is defined and proved to be global rigid. Then each node tries to find a wheel graph in neighbors. If three or more nodes of a wheel are localizable, all other nodes of the wheel are localizable. WE inherits the efficiency of TP, but it may fail if there are not at least three anchors coexist in one wheel at the beginning, then no node can be localized; 

(3) To supplement TP and WE, Wu et al.\cite{wu2017triangle} defines the concept of branch, and proves that a branch including three anchors is localizable. A \textbf{triangle extension (TE)} method is proposed to construct branches. The branch construction starts from two connected anchors and extends the branch by extension operations. Once the extension reaches the third anchor or a localizable node, all nodes on the branch are localizable. However, TE fails when every two anchors do not share any common neighbor; Localizability detection in 3D has been explored in \cite{xia2016localizability}.

In BLL, there is neither a network localizability testing algorithm nor a node localizability detecting algorithm yet.



\subsection{Summary of BLL Localizability Detection and the Overview of Our Work}
An overview of the state-of-the-art for network localizability, node localizability detection conditions and  algorithms is summarized in Table~\ref{table1}. Overall, the localizability problem has been well researched in NLL except for a necessary and sufficient node localizability condition and the corresponding detection algorithm that guarantee to detect all localizable nodes. The research on localizability in BLL is much lacked. Only the necessary and sufficient condition for BLL network localizability is given in \cite{ECHO}. The testing algorithm for BLL network localizability, conditions and algorithm for BLL node localizability are still open.

To fill the knowledge gap in BLL, this paper proposes: 

1) A necessary condition for BLL node localizability. 

2) A sufficient condition for BLL node localizability. 

3) A testing algorithm for BLL network localizability, and 

4) A detection algorithm for BLL localizable nodes. 

The necessary and sufficient condition for node localizability is still left as an open problem.

\section{A Necessary Condition and A Sufficient Condition for BLL Node Localizability}

\label{sect:conditions}
Since the \emph{network localizability} condition for BLL has been given \cite{ECHO}, we will explore the \emph{node localizability} conditions for BLL in this section. For a network $\mathcal{G=\{V,E,\bf{d}\}}$, Yang et. al \cite{yang2011understanding} proposed that even $(v_i,v_j)\notin \mathcal{E}$, there might be an \emph{implicit edge} between the two nodes if $d_{ij}$ can be uniquely determined by two rigid constraints. Hereinafter, we assume that $\mathcal{G}$ is the graph after adding the implicit edges.

\subsection{Necessary Condition for BLL Node Localizability}
From Lemma~\ref{lemma:loc_BLL}, a node should have three node-disjoint paths (3P) to anchors in $\mathcal{G^A}$ to be BLL-localizable. To make a certain node has 3P to anchors, it is straightforward that it must have at least 3 neighbors in $\mathcal{G^A}$. From the perspective of the original graph $\mathcal{G}$, this necessary condition is formulated as follows. 
\begin{theorem}[Necessity of $d+1$ Mutually Connected Neighbors]
	\label{theorem:Necessity_BLL}
	In a graph $\mathcal{G=\{V,E\}}$ in $\mathbb{R}^d$, for a node to be localizable using BLL, it must have at least $d+1$ mutually connected neighbors. 
\end{theorem}
\begin{proof}
	Recall that in the construction of the BLL linear model, the barycentric coordinates are calculated using the Cayley-Menger bideterminant, where the distance between each pair of nodes from the input $\{v_1,\cdots,v_k\}$ is involved. Thus, for node $v_i$, $\exists\mathbf{A}_{ij}\neq 0$ (i.e., $v_j$ is a neighbor of $v_i$ in $\mathcal{G^A}$) only if $v_i$ has $d+1$ mutually connected neighbors so that the Cayley-Menger bideterminant can be calculated.
\end{proof}
Theorem~\ref{theorem:Necessity_BLL} implies that BLL node localizability requires stronger connectivity for each node. Neighbors in $\mathcal{N}_i$ need to be mutually connected to contribute to the BLL-localizability of $v_i$. That is, BLL-localizability is theoretically more difficult to be satisfied than NLL-localizability. For example, Yang et. al. have proved that a wheel graph like Fig.~\ref{fig:demo_bll_hardness} is global rigid (i.e., NLL-localizable)\cite{Yang2009}. But none of these nodes are BLL-localizable since each node's neighbors are not mutually connected. 
\begin{figure}[ht!]
	\centering
	\subfigure[A wheel graph.]{
		\begin{minipage}[t]{0.46\linewidth}
			\centering
			\includegraphics[width=.8\textwidth]{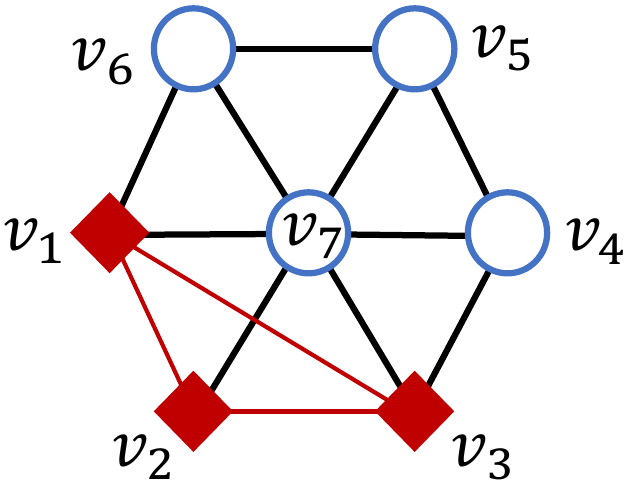}
			\label{fig:demo_bll_hardness}
		\end{minipage}
	}
	\subfigure[The corresponding $\mathcal{G^A}$.]{
		\begin{minipage}[t]{0.46\linewidth}
			\centering
			\includegraphics[width=0.8\textwidth]{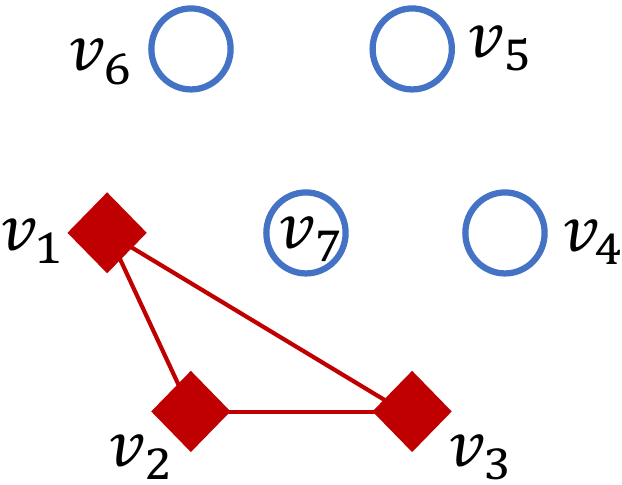}
			\label{fig:demo_bll_hardness_2}
		\end{minipage}
	}
	\caption{A wheel graph is NLL-localizable but not BLL-localizable.}
	\label{fig:bll_hardness}
\end{figure}
\subsection{Sufficient Condition for BLL Node Localizability}

From Lemma~\ref{lemma:loc_BLL}, we derive a sufficient condition for BLL node localizability by recursion.

\begin{theorem}[Sufficiency of  Recursive-3DP]
	\label{theorem:sufficient_BLL}
	In a graph $\mathcal{G=\{V,E\}}$ in $\mathbb{R}^2$, if 1) $v_i$ has 3P to anchors in $\mathcal{G^A}$, without loss of generality, say $p_1$, $p_2$, and $p_3$, and 2) every node on each path $p_1$, $p_2$, and $p_3$ has 3P to anchors. Then, node $v_i$ is BLL-localizable in $\mathcal{G}$. We call this condition  Recursive-3DP  for brief. 
\end{theorem}
\begin{proof}	
	If both 1) and 2) are satisfied, the graph induced by $v_i$ and the nodes it passes by on its paths satisfies Lemma~\ref{lemma:loc_BLL}. So all these nodes are BLL-localizable. 
	
	
\end{proof}
Theorem~\ref{theorem:sufficient_BLL} implies that the set of BLL-localizable nodes and the set of BLL-unlocalizable nodes should be edge-independent, i.e., the BLL-localizable nodes should have no edge connecting BLL-unlocalizable nodes. Thus, the BLL-localizable nodes can be found by iteratively removing nodes without 3P.





\section{Algorithms for BLL Network Localizability and BLL Node Localizability}
\label{sect:algorithms}
Given a network $\mathcal{G}$, there is no existing algorithm that can tell the network localizability or node localizability in BLL. From the previous section, we can observe that the key issue of BLL-localizability is verifying the $3P$ condition. Thus in this section, we first transform the verification of $3P$ to the calculation of max-flow in an appropriately constructed flow graph. Then, a max-flow-based algorithm is presented to test whether $\mathcal{G}$ is BLL-localizable by checking the max-flow in its $\mathcal{G^A}$. Furthermore, we adopt this idea to detect the BLL-localizable nodes using Theorem~\ref{theorem:sufficient_BLL} through an iterative algorithm. 

\subsection{Disjoint Path and Max-Flow}
\label{sect:dp_max_flow}
To proceed, we introduce a flow graph $\mathcal{G^F}=(\mathcal{V^F},\mathcal{E^F},\mathcal{C})$ constructed from $\mathcal{G=\{V,E\}}$, where $\mathcal{V}=\mathcal{A}\cup\mathcal{F}$.
\begin{itemize}
	\item Divide each free node $v_i\in\mathcal{F}$ into two copies $\{v_i^{in}, v_i^{out}\}$ and add them to $\mathcal{V^F}$. Then, add anchor nodes in $\mathcal{A}$ and an virtual target node $\Omega$ to $\mathcal{V^F}$.
	\item  For each $v_i\in\mathcal{F}$, add $(v_i^{in}, v_i^{out})$ to $\mathcal{E^F}$. For each neighbor $v_j\in\mathcal{N}_i$, add $(v_i^{out},v_j)$ to $\mathcal{E^F}$ if $v_j$ is an anchor, add $(v_i^{out},v_j^{in})$ to $\mathcal{E^F}$, otherwise; For each anchor node $v_i\in\mathcal{A}$, add $(v_i,\Omega)$ to $\mathcal{E^F}$. 
	\item Assign $\mathcal{C}_{ij}=1$ in the capacity matrix $\mathcal{C}$ if $(v_i,v_j)\in\mathcal{E^F}$.
\end{itemize}

Then, we show the equality between the number of disjoint paths (DP) in $\mathcal{G}$ and the Max-Flow in $\mathcal{G^F}$.

\begin{lemma}
	Consider a network $\mathcal{G}$ and its corresponding $\mathcal{G^F}$, the following two are equivalent.
	\begin{itemize}
		\item The count of disjoint paths from $v_i$ to anchors in $\mathcal{G}$.
		\item The max flow from $v_i^{out}$ to $\Omega$ in $\mathcal{G^F}$.
	\end{itemize}
	\label{lemma:max_flow_dp}
\end{lemma}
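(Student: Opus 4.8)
The plan is to prove this equivalence by establishing a bijection between node-disjoint paths in $\mathcal{G}$ and sets of arc-disjoint (hence unit-flow) paths in $\mathcal{G}^{'}$, then invoking the integral max-flow / Menger's theorem machinery. Since all capacities equal $1$, an integral max flow of value $k$ decomposes into exactly $k$ arc-disjoint paths from $i_{out}$ to $\Omega$; conversely $k$ such arc-disjoint paths carry a feasible integral flow of value $k$. So it suffices to show that arc-disjoint $i_{out}$-to-$\Omega$ paths in $\mathcal{G}^{'}$ correspond precisely to node-disjoint paths from $i$ to the anchor set in $\mathcal{G}$.

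The central device is the node-splitting gadget: each agent $v_j$ becomes the pair $\{v_j^{in}, v_j^{out}\}$ joined by the single internal arc $(v_j^{in}, v_j^{out})$ of capacity $1$. First I would argue the forward direction: given a node-disjoint path collection in $\mathcal{G}$, map each $\mathcal{G}$-path to the $\mathcal{G}^{'}$-path that traces the same node sequence, replacing each intermediate agent $v_j$ by the sub-walk $v_j^{in}\to v_j^{out}$ and replacing the final anchor-to-target step by the arc $(\text{anchor},\Omega)$. Because the original paths share no internal node, no internal arc $(v_j^{in},v_j^{out})$ is reused, and the remaining ``between-node'' arcs are distinct as well; hence the images are arc-disjoint and give a flow of value $k$. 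For the reverse direction, I would take a decomposition of an integral flow into $k$ arc-disjoint paths and project each back to $\mathcal{G}$ by contracting each pair $\{v_j^{in},v_j^{out}\}$ to $v_j$; I must verify this projection yields genuine node-disjoint paths.

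The main obstacle is precisely this reverse step: arc-disjointness in $\mathcal{G}^{'}$ must be shown to force node-disjointness in $\mathcal{G}$, and this is exactly where the capacity-$1$ internal arc does the work. If two flow paths both passed through agent $v_j$, they would both have to use the unique internal arc $(v_j^{in},v_j^{out})$, since in-flow to $v_j^{in}$ can only leave through that single arc and out-flow from $v_j^{out}$ can only have entered through it; this contradicts arc-disjointness (equivalently, it violates the capacity-$1$ constraint at that arc). I would make this rigorous via a conservation/degree argument at the split node, noting that $v_j^{in}$ has out-degree one in $\mathcal{G}^{'}$ so any unit of flow entering it is pinned to the internal arc. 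One should also confirm that the anchors (which are not split) do not create spurious coincidences: since every anchor routes only to $\Omega$ through its single $(\text{anchor},\Omega)$ arc, and the target $\Omega$ is common to all paths by design, distinctness of the paths is controlled at the anchors' incoming arcs rather than at $\Omega$ itself.

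Finally I would close the loop by combining both directions: the forward map sends $k$ node-disjoint paths to a value-$k$ flow, the reverse map sends a value-$k$ integral flow to $k$ node-disjoint paths, and the two maps are mutually consistent on path sequences, so the maximum count of node-disjoint paths from $i$ to the anchors equals the max flow from $i_{out}$ to $\Omega$. A minor technical point worth stating explicitly is that it is enough to exhibit a single integral maximum flow, which is guaranteed by the integrality theorem for max flow when all capacities are integers, so no fractional-flow complications arise.
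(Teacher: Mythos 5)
Your proposal is correct, and its forward direction (embedding each node-disjoint path of $\mathcal{G}$ into $\mathcal{G}^{'}$ by routing through $v_j^{in}\to v_j^{out}$ at each intermediate agent, and through the $(\text{anchor},\Omega)$ arc at the end) coincides with the first half of the paper's proof. Where you genuinely diverge is in the converse. You close the argument with the integrality theorem plus flow decomposition: an integral max flow of value $k$ in the unit-capacity network decomposes into $k$ arc-disjoint paths, which you project back to $\mathcal{G}$ by contracting each pair $\{v_j^{in},v_j^{out}\}$, using the fact that $v_j^{in}$ has a single outgoing unit-capacity arc to show two such paths cannot share an agent. The paper instead makes the same capacity-one observation ("each node $v_j$ is passed by at most one path") and then appeals to max-flow/min-cut equality to conclude; it never explicitly constructs node-disjoint paths in $\mathcal{G}$ from a flow in $\mathcal{G}^{'}$, and its remark that "any path must go through one edge of the minimum cut" does no real work in the argument. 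Your flow-decomposition step supplies precisely the direction the paper glosses over, and you also handle a detail the paper skips entirely: since anchors are not split but have $(\text{anchor},\Omega)$ as their unique outgoing arc, arc-disjointness forces distinct paths to terminate at distinct anchors, which is needed for the projected paths to be disjoint in the sense of Lemma~\ref{lemma:loc_BLL}. The paper's duality-based closing is shorter; your version is the more self-contained and is the one that actually establishes both directions of the stated equivalence.
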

\begin{proof}
	First we prove that every path to anchors in $\mathcal{G}$ is included in $\mathcal{G^F}$. Suppose an arbitrary edge $(v_i,v_j)$ on any path $\mathcal{P}=v_1\cdots v_k$ in $\mathcal{G}$. If $v_i$ is an free node and $v_j$ is an anchor node, then the edge is maintained by $v_i^{in}\rightarrow v_i^{out}\rightarrow v_j$. If both $v_i$ and $v_j$ are free nodes, the edge is maintained by $v_i^{in}\rightarrow v_i^{out}\rightarrow v_j^{in}\rightarrow v_j^{out}$. Since we focus on paths to anchors, the case that $v_i$ is an anchor node and $v_j$ is a free node is not considered. Thus, any path $\mathcal{P}$ from $v_i$ to anchors in $\mathcal{G}$ can be transformed to a new path in $\mathcal{G^F}$; 	
	Suppose any source node $v_i^{out}$, it is straightforward that any path from $v_i^{out}$ to $\Omega$ must go through one edge of the minimum cut. And each node $v_j$ is passed by at most one path since the capacity of $(v_j^{in},v_j^{out})$ is one. Because the maximum flow equals the minimum edge cut, the count of disjoint paths from $v_i$ to anchors in $\mathcal{G}$ equals the max flow from $v_i^{out}$ to $\Omega$ in $\mathcal{G^F}$.
\end{proof}

A case study is shown in Fig.~\ref{fig:demo_max_flow}. The generated graph $\mathcal{G^A}$ of a network $\mathcal{G}$ of 5 nodes is given in Figure \ref{fig:demo_flow_G}. Its corresponding $\mathcal{G^F}$ is in Figure \ref{fig:demo_flow_G_A}. The constructed capacity matrix $\mathcal{C}$ is shown in Table~\ref{tab:capacity}. 
It is easy to verify the equivalence between the number of disjoint paths in $\mathcal{G^A}$ and the max flow in $\mathcal{G^F}$. For example, there are five paths from $v_4$ to anchors, i.e.,  $v_4\rightarrow v_1, v_4\rightarrow v_5 \rightarrow v_1, v_4\rightarrow v_5 \rightarrow v_2, v_4\rightarrow v_5 \rightarrow v_3, v_4\rightarrow v_6 \rightarrow v_3$. Three of them are node disjoint, i.e., $v_4\rightarrow v_1, v_4\rightarrow v_5 \rightarrow v_2, v_4\rightarrow v_6 \rightarrow v_3$. Using Orlin’s method \cite{orlin2013max}, we can obtain that the max-flow from $v_4$ to $\Omega$ in $\mathcal{G^F}$ is also three.



\begin{figure}[ht!]
	\centering
	\subfigure[$\mathcal{G^A}$]{
		\begin{minipage}[t]{0.42\linewidth}
			\centering
			\includegraphics[width=.75\textwidth]{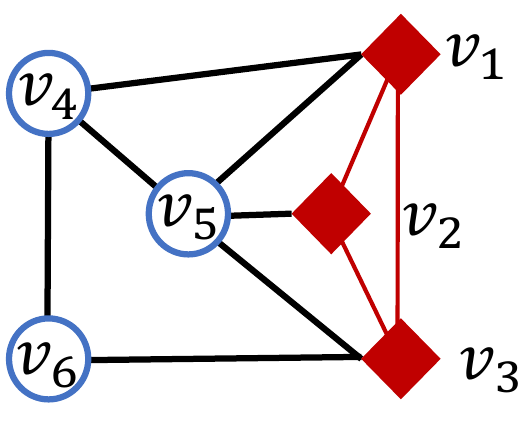}
			\label{fig:demo_flow_G}
		\end{minipage}
	}
	\subfigure[$\mathcal{G}^{F}$]{
		\begin{minipage}[t]{0.46\linewidth}
			\centering
			\includegraphics[width=0.99\textwidth]{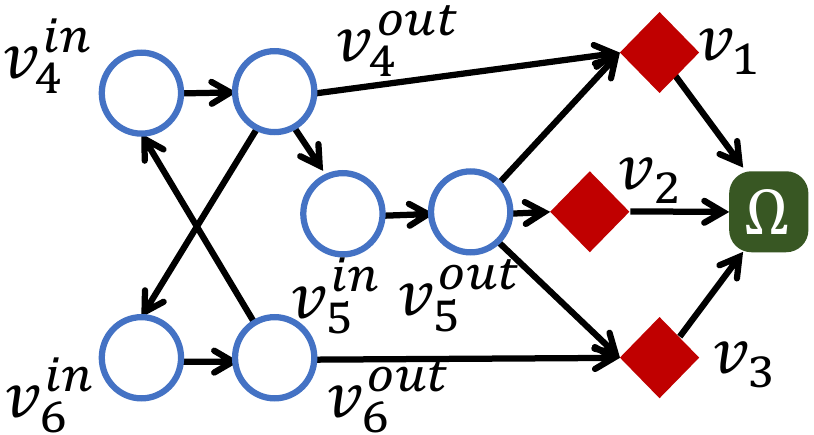}
			\label{fig:demo_flow_G_A}
		\end{minipage}
	}
	\caption{A demonstration of the construction of $\mathcal{G^F}$.}
	\label{fig:demo_max_flow}
\end{figure}

\newcommand\cellwidth{0.08cm}
\begin{table}[ht]
	\footnotesize
	\caption{The capacity matrix $\mathcal{C}$ of the flow graph $\mathcal{G^F}$ in Fig.~\ref{fig:demo_flow_G_A}.}
	\centering
	\begin{tabular}{p{\cellwidth}p{\cellwidth}p{\cellwidth}p{\cellwidth}p{\cellwidth}p{\cellwidth}p{\cellwidth}p{\cellwidth}p{\cellwidth}p{\cellwidth}p{\cellwidth}p{\cellwidth}}		
		& $v_4^{in}$                                             & $v_4^{out}$                                            & $v_5^{in}$                                             & $v_5^{out}$                                           & $v_6^{in}$                                            & $v_6^{out}$                                            & 1                                              & 2                                              & 3                                              & $\Omega$                                             &       \\ \cline{1-11} 
		\multicolumn{1}{|l|}{\cellcolor[HTML]{ECF4FF}}                               & \multicolumn{1}{l|}{\cellcolor[HTML]{ECF4FF}0} & \multicolumn{1}{l|}{\cellcolor[HTML]{ECF4FF}1} & \multicolumn{1}{l|}{\cellcolor[HTML]{ECF4FF}0} & \multicolumn{1}{l|}{\cellcolor[HTML]{ECF4FF}0} & \multicolumn{1}{l|}{\cellcolor[HTML]{ECF4FF}0} & \multicolumn{1}{l|}{\cellcolor[HTML]{ECF4FF}0} & \multicolumn{1}{l|}{\cellcolor[HTML]{ECF4FF}0} & \multicolumn{1}{l|}{\cellcolor[HTML]{ECF4FF}0} & \multicolumn{1}{l|}{\cellcolor[HTML]{ECF4FF}0} & \multicolumn{1}{l|}{\cellcolor[HTML]{ECF4FF}0} & $v_4^{in}$     \\ \cline{2-11}
		\multicolumn{1}{|l|}{\cellcolor[HTML]{ECF4FF}}                               & \multicolumn{1}{l|}{\cellcolor[HTML]{ECF4FF}0} & \multicolumn{1}{l|}{\cellcolor[HTML]{ECF4FF}0} & \multicolumn{1}{l|}{\cellcolor[HTML]{ECF4FF}1} & \multicolumn{1}{l|}{\cellcolor[HTML]{ECF4FF}0} & \multicolumn{1}{l|}{\cellcolor[HTML]{ECF4FF}1} & \multicolumn{1}{l|}{\cellcolor[HTML]{ECF4FF}0} & \multicolumn{1}{l|}{\cellcolor[HTML]{ECF4FF}1} & \multicolumn{1}{l|}{\cellcolor[HTML]{ECF4FF}0} & \multicolumn{1}{l|}{\cellcolor[HTML]{ECF4FF}0} & \multicolumn{1}{l|}{\cellcolor[HTML]{ECF4FF}0} & $v_4^{out}$     \\ \cline{2-11}
		\multicolumn{1}{|l|}{\cellcolor[HTML]{ECF4FF}}                               & \multicolumn{1}{l|}{\cellcolor[HTML]{ECF4FF}0} & \multicolumn{1}{l|}{\cellcolor[HTML]{ECF4FF}0} & \multicolumn{1}{l|}{\cellcolor[HTML]{ECF4FF}0} & \multicolumn{1}{l|}{\cellcolor[HTML]{ECF4FF}1} & \multicolumn{1}{l|}{\cellcolor[HTML]{ECF4FF}0} & \multicolumn{1}{l|}{\cellcolor[HTML]{ECF4FF}0} & \multicolumn{1}{l|}{\cellcolor[HTML]{ECF4FF}0} & \multicolumn{1}{l|}{\cellcolor[HTML]{ECF4FF}0} & \multicolumn{1}{l|}{\cellcolor[HTML]{ECF4FF}0} & \multicolumn{1}{l|}{\cellcolor[HTML]{ECF4FF}0} & $v_5^{in}$     \\ \cline{2-11}
		\multicolumn{1}{|l|}{\cellcolor[HTML]{ECF4FF}}                               & \multicolumn{1}{l|}{\cellcolor[HTML]{ECF4FF}1} & \multicolumn{1}{l|}{\cellcolor[HTML]{ECF4FF}0} & \multicolumn{1}{l|}{\cellcolor[HTML]{ECF4FF}0} & \multicolumn{1}{l|}{\cellcolor[HTML]{ECF4FF}0} & \multicolumn{1}{l|}{\cellcolor[HTML]{ECF4FF}0} & \multicolumn{1}{l|}{\cellcolor[HTML]{ECF4FF}0} & \multicolumn{1}{l|}{\cellcolor[HTML]{ECF4FF}1} & \multicolumn{1}{l|}{\cellcolor[HTML]{ECF4FF}1} & \multicolumn{1}{l|}{\cellcolor[HTML]{ECF4FF}1} & \multicolumn{1}{l|}{\cellcolor[HTML]{ECF4FF}0} & $v_5^{out}$     \\ \cline{2-11}
		\multicolumn{1}{|l|}{\cellcolor[HTML]{ECF4FF}}                               & \multicolumn{1}{l|}{\cellcolor[HTML]{ECF4FF}0} & \multicolumn{1}{l|}{\cellcolor[HTML]{ECF4FF}0} & \multicolumn{1}{l|}{\cellcolor[HTML]{ECF4FF}0} & \multicolumn{1}{l|}{\cellcolor[HTML]{ECF4FF}0} & \multicolumn{1}{l|}{\cellcolor[HTML]{ECF4FF}0} & \multicolumn{1}{l|}{\cellcolor[HTML]{ECF4FF}1} & \multicolumn{1}{l|}{\cellcolor[HTML]{ECF4FF}0} & \multicolumn{1}{l|}{\cellcolor[HTML]{ECF4FF}0} & \multicolumn{1}{l|}{\cellcolor[HTML]{ECF4FF}0} & \multicolumn{1}{l|}{\cellcolor[HTML]{ECF4FF}0} & $v_6^{in}$     \\ \cline{2-11}
		\multicolumn{1}{|l|}{\multirow{-6}{*}{\cellcolor[HTML]{ECF4FF}\rotatebox{90}{Free Nodes}}}   & \multicolumn{1}{l|}{\cellcolor[HTML]{ECF4FF}1} & \multicolumn{1}{l|}{\cellcolor[HTML]{ECF4FF}0} & \multicolumn{1}{l|}{\cellcolor[HTML]{ECF4FF}0} & \multicolumn{1}{l|}{\cellcolor[HTML]{ECF4FF}0} & \multicolumn{1}{l|}{\cellcolor[HTML]{ECF4FF}0} & \multicolumn{1}{l|}{\cellcolor[HTML]{ECF4FF}0} & \multicolumn{1}{l|}{\cellcolor[HTML]{ECF4FF}0} & \multicolumn{1}{l|}{\cellcolor[HTML]{ECF4FF}0} & \multicolumn{1}{l|}{\cellcolor[HTML]{ECF4FF}1} & \multicolumn{1}{l|}{\cellcolor[HTML]{ECF4FF}0} & $v_6^{out}$     \\ \cline{1-11}
		\multicolumn{1}{|l|}{\cellcolor[HTML]{FFCCC9}}                               & \multicolumn{1}{l|}{\cellcolor[HTML]{FFCCC9}0} & \multicolumn{1}{l|}{\cellcolor[HTML]{FFCCC9}0} & \multicolumn{1}{l|}{\cellcolor[HTML]{FFCCC9}0} & \multicolumn{1}{l|}{\cellcolor[HTML]{FFCCC9}0} & \multicolumn{1}{l|}{\cellcolor[HTML]{FFCCC9}0} & \multicolumn{1}{l|}{\cellcolor[HTML]{FFCCC9}0} & \multicolumn{1}{l|}{\cellcolor[HTML]{FFCCC9}0} & \multicolumn{1}{l|}{\cellcolor[HTML]{FFCCC9}0} & \multicolumn{1}{l|}{\cellcolor[HTML]{FFCCC9}0} & \multicolumn{1}{l|}{\cellcolor[HTML]{FFCCC9}1} & 1     \\ \cline{2-11}
		\multicolumn{1}{|l|}{\cellcolor[HTML]{FFCCC9}}                               & \multicolumn{1}{l|}{\cellcolor[HTML]{FFCCC9}0} & \multicolumn{1}{l|}{\cellcolor[HTML]{FFCCC9}0} & \multicolumn{1}{l|}{\cellcolor[HTML]{FFCCC9}0} & \multicolumn{1}{l|}{\cellcolor[HTML]{FFCCC9}0} & \multicolumn{1}{l|}{\cellcolor[HTML]{FFCCC9}0} & \multicolumn{1}{l|}{\cellcolor[HTML]{FFCCC9}0} & \multicolumn{1}{l|}{\cellcolor[HTML]{FFCCC9}0} & \multicolumn{1}{l|}{\cellcolor[HTML]{FFCCC9}0} & \multicolumn{1}{l|}{\cellcolor[HTML]{FFCCC9}0} & \multicolumn{1}{l|}{\cellcolor[HTML]{FFCCC9}1} & 2     \\ \cline{2-11}
		\multicolumn{1}{|l|}{\multirow{-3}{*}{\cellcolor[HTML]{FFCCC9}\rotatebox{90}{Anchors}}} & \multicolumn{1}{l|}{\cellcolor[HTML]{FFCCC9}0} & \multicolumn{1}{l|}{\cellcolor[HTML]{FFCCC9}0} & \multicolumn{1}{l|}{\cellcolor[HTML]{FFCCC9}0} & \multicolumn{1}{l|}{\cellcolor[HTML]{FFCCC9}0} & \multicolumn{1}{l|}{\cellcolor[HTML]{FFCCC9}0} & \multicolumn{1}{l|}{\cellcolor[HTML]{FFCCC9}0} & \multicolumn{1}{l|}{\cellcolor[HTML]{FFCCC9}0} & \multicolumn{1}{l|}{\cellcolor[HTML]{FFCCC9}0} & \multicolumn{1}{l|}{\cellcolor[HTML]{FFCCC9}0} & \multicolumn{1}{l|}{\cellcolor[HTML]{FFCCC9}1} & 3     \\ \cline{1-11}
		\multicolumn{1}{|l|}{\cellcolor[HTML]{009901}$\Omega$}                          & \multicolumn{1}{l|}{\cellcolor[HTML]{009901}0} & \multicolumn{1}{l|}{\cellcolor[HTML]{009901}0} & \multicolumn{1}{l|}{\cellcolor[HTML]{009901}0} & \multicolumn{1}{l|}{\cellcolor[HTML]{009901}0} & \multicolumn{1}{l|}{\cellcolor[HTML]{009901}0} & \multicolumn{1}{l|}{\cellcolor[HTML]{009901}0} & \multicolumn{1}{l|}{\cellcolor[HTML]{009901}0} & \multicolumn{1}{l|}{\cellcolor[HTML]{009901}0} & \multicolumn{1}{l|}{\cellcolor[HTML]{009901}0} & \multicolumn{1}{l|}{\cellcolor[HTML]{009901}0} & $\Omega$ \\ \cline{1-11}
	\end{tabular}
	\label{tab:capacity}
\end{table}

\subsection{BLL Network Localizability Test}
Now, whether a network $\mathcal{G}$ is BLL-localizable can be tested by checking $\mathcal{G^A}$ using Lemma~\ref{lemma:max_flow_dp}. Algorithm~\ref{alg:BLL_Test} shows the routine. Recall that a network is BLL-localizable means that all the nodes are BLL-localizable. So if any node does not meet the necessary condition in Theorem~\ref{theorem:Necessity_BLL}, the network is not BLL-localizable (Line~\ref{mf:necessary_start}-Line~\ref{mf:necessary_end}). Then, if all nodes meet the necessary condition, we construct the generated graph $\mathcal{G^A}$ shown as \textbf{Function} \emph{construct\_generated\_graph}($\mathcal{G}$) (Line~\ref{alg:construct_ga_begin}-Line~\ref{alg:construct_ga_end}). Then the flow graph $\mathcal{G^F}$ of $\mathcal{G^A}$ is constructed using the method in Section~\ref{sect:dp_max_flow}. The max flow is calculated using Orlin’s method \cite{orlin2013max} (Line~\ref{mf:cal_max_flow_mf}).
In Lemma~\ref{lemma:loc_BLL}, the BLL network localizability condition requires that each free node should have 3P to anchors in $\mathcal{G^A}$, so $\mathcal{G}$ is BLL-localizable if the max-flow from each $v_i\in\mathcal{F}$ to $\Omega$ is not less than 3 in $\mathcal{G^F}$, BLL-unlocalizable, otherwise (Line~\ref{alg:network_loc_start}-Line~\ref{alg:network_loc_end}).

Use the graph in Fig.~\ref{fig:demo_max_flow} for instance. Let $\bf{DP}^{\mathcal{G_A}}_{i\rightarrow \mathcal{A}}$ be the number of disjoint paths from $v_i$ to nodes in $\mathcal{A}$ in $\mathcal{G_A}$, and $\bf{Max\_Flow}_{i\rightarrow\Omega}^{\mathcal{G^F}}$ be the max flow from $v_i$ to the virtual node $\Omega$. We can obtain:

\begin{equation}
	\left\{
	\begin{aligned} 
		\mathbf{DP}_{4\rightarrow\{1,2,3\}}^{\mathcal{G^A}} & = \mathbf{Max\_Flow}_{4\rightarrow\Omega}^{\mathcal{G^F}}=3,\\
		\mathbf{DP}_{5\rightarrow\{1,2,3\}}^{\mathcal{G^A}} & = \mathbf{Max\_Flow}_{5\rightarrow\Omega}^{\mathcal{G^F}}=3, \\
		\mathbf{DP}_{6\rightarrow\{1,2,3\}}^{\mathcal{G^A}} & = \mathbf{Max\_Flow}_{6\rightarrow\Omega}^{\mathcal{G^F}}=2.
	\end{aligned}\right.
	\label{equ:demo_dp_mf}
\end{equation}
Thus, through Algorithm~\ref{alg:BLL_Test}, we can conclude that the network in Fig.~\ref{fig:demo_max_flow} is not BLL-localizable due to the existence of $v_6$.

\begin{algorithm}[htb!]
	\caption{\textbf{The Max-Flow (MF) Algorithm for BLL Network Localizability Test}}
	\label{alg:BLL_Test}
	\KwIn{$\mathcal{G=\{V,E\}}, \mathcal{V=A\cup F}$.}
	\KwOut{Network BLL-localizability: \emph{true} or \emph{false}.}
	\nl add the implicit edges to $\mathcal{E}$\cite{yang2011understanding}\;
	\lnl{mf:necessary_start} \If{$\exists v_i\in\mathcal{V}$ has no 3 mutually connected neighbors}
	{
		\tcp{\small{the necessary condition in Theorem~\ref{theorem:Necessity_BLL} is not satisfied}}
		\lnl{mf:necessary_end} \KwRet{false}\;
	}
	\nl $\mathcal{G^A}=\{\mathcal{V},\mathcal{E^A}\}\gets construct\_generated\_graph(\mathcal{G})$\;
	\nl construct the flow graph $\mathcal{G^F}=(\mathcal{V^F},\mathcal{E^F},\mathcal{C})$ of $\mathcal{G^A}$ as in Section~\ref{sect:dp_max_flow}\;
	\lnl{mf:cal_max_flow_mf} calculate  $\mathbf{Max\_Flow}_{i\rightarrow\Omega}^{\mathcal{G^F}}$ for each $v_i\in\mathcal{F}$\;
	\lnl{alg:network_loc_start} \eIf{$\exists$  $\mathbf{Max\_Flow}_{i\rightarrow\Omega}^{\mathcal{G^F}}<3$}
	{
		\nl \KwRet{false}\; 
	}
	{
		\lnl{alg:network_loc_end} \KwRet{true}\;
	}
	\Fn{construct\_generated\_graph($\mathcal{G}$)}
	{
		\lnl{alg:construct_ga_begin}\For{each node $v_i\in \mathcal{F}$}
		{
			\nl $tri\_count\gets 0$\;
			\nl\For{each combination of 3 mutually connected nodes $\{v_j,v_k,v_l\}$ in $\mathcal{N}_i$}
			{
				\nl $tri\_count\gets tri\_count+1$\;
				\nl $\{\mathbf{A}_{ij}^{tri\_count},\mathbf{A}_{ik}^{tri\_count},\mathbf{A}_{il}^{tri\_count}\}\gets$ (\ref{equ:barycentric_representation})\;
				\nl $\mathbf{A}_{is}^{tri\_count}\gets 0$ for $s\in \mathcal{N}_i\setminus\{v_j,v_k,v_l\}$\;
				
			}
			\nl $\mathbf{A}_{ir} = \frac{1}{tri\_count}\sum_{t=1}^{tri\_count} \mathbf{A}_{ir}^{t}$, for $r\in\mathcal{N}_i$
			\;
		}
		\nl $\mathcal{G^A}=\{\mathcal{V},\mathcal{E^A}\}$, where $(v_i,v_j)\in\mathcal{E^A}$ if $\mathbf{A}_{ij}\neq 0$\;
		\lnl{alg:construct_ga_end} \KwRet{$\mathcal{G^A}$}\;
	}	
\end{algorithm}

\subsection{BLL Node Localizability Detection}
Detecting the BLL-localizable nodes in a partial localizable network is conducted through Theorem~\ref{theorem:sufficient_BLL}, i.e., detecting the nodes that satisfy the  Recursive-3DP  condition. The routine is shown as the IMF algorithm in Algorithm~\ref{alg:BLL_Detection}. 
IMF iteratively removes nodes not having 3P and the corresponding capacities of these nodes in each round. In each round, only the nodes having 3P and the edges among them (denoted by \textbf{Edge}($\cdot$)) are left (Line~\ref{alg:remove_capacity}-Line~\ref{alg:go_back}). Finally, IMF terminates when no more nodes can be removed (Line~\ref{alg:ret}). 

Take Fig.~\ref{fig:demo_max_flow} as an example to show the process of IMF. In the first round, $v_6$ is removed since $\mathbf{Max\_Flow}_{6\rightarrow\Omega}^{\mathcal{G^F}}=2$, then   $\mathcal{F}^{*}=\{v_4,v_5\}$; In the second round,  $\mathbf{Max\_Flow}_{4\rightarrow\Omega}^{\mathcal{G^F}}$ becomes 2 since $v_6$ and its corresponding edges are removed. Then, $\mathcal{F}^{*}$ becomes $\{v_5\}$; In the third round, $\mathcal{F}^{*}$ does not change since the 3P of $v_5$ does not pass through nodes without 3P. So IMF terminates and outputs $\{v_5\}$ as the BLL-localizable node set.


\begin{algorithm}[htb!]
	\caption{\textbf{The Iterative Max-Flow (IMF) Algorithm for BLL Node Localizability Detection}}
	\label{alg:BLL_Detection}	
	\KwIn{$\mathcal{G=\{V,E\}}, \mathcal{V=A\cup F}$.}
	\KwOut{Set of BLL-localizable nodes: $\mathcal{F}^{*}$.}
	\nl add the implicit edges to $\mathcal{E}$\cite{yang2011understanding}\;
	\lnl{alg:construct_g_a} $\mathcal{G^A}=\{\mathcal{V},\mathcal{E^A}\}\gets construct\_generated\_graph(\mathcal{G})$\;
	\lnl{alg:construct_g_f} construct the flow graph $\mathcal{G^F}=(\mathcal{V^F},\mathcal{E^F},\mathcal{C})$ of $\mathcal{G^A}$ as in Section~\ref{sect:dp_max_flow}\;
	\lnl{alg:cal_max_flow_imf} calculate $\mathbf{Max\_Flow}_{i\rightarrow\Omega}^{\mathcal{G^F}}$ for each $v_i\in\mathcal{F}$\;
	\nl $\mathcal{F}^{*}\gets\{v_i|\mathbf{Max\_Flow}_{i\rightarrow\Omega}^{\mathcal{G^F}}\geq 3\}$; \tcp{\small{the nodes whose max flow is not less than 3}}
	\nl \eIf{$\mathcal{F}^{*} = \mathcal{F}$}
	{
		\lnl{alg:ret} \KwRet{$\mathcal{F}^{*}$}\;
	}
	{
		\lnl{alg:remove_capacity} remove the capacities of $\mathcal{F}\setminus\mathcal{F}^{*}$ from $\mathcal{C}$\;
		\nl $\mathcal{F}\gets\mathcal{F}^{*}$, 
		$\mathcal{G}\gets\{\mathcal{A}\cup\mathcal{F}, \mathbf{Edge}(\mathcal{A}\cup\mathcal{F})\}$\;
		\lnl{alg:update_g_a} reconstruct $\mathcal{G^A}$ of $\mathcal{G}$\;\lnl{alg:go_back} go back to Line~\ref{alg:construct_g_f}\;
	}	
\end{algorithm}


\section{Analysis and Discussion}
\label{sect:analysis}
In this section, we first give the theoretical analysis of our IMF algorithm. Then we discuss its extensions. Finally, we give an application of IMF in NL.

\subsection{Analysis of IMF}

\subsubsection{Validity}

\begin{theorem}
	Given $\mathcal{G=\{V,E\}}$, all the nodes satisfying Theorem~\ref{theorem:sufficient_BLL} can be detected by Algorithm~\ref{alg:BLL_Detection}.
	\label{lemma:validity_IMF}
\end{theorem}

\begin{proof}
	Index the nodes removed by Algorithm \ref{alg:BLL_Detection} as $\mathcal{F}\setminus\mathcal{F}^{*} = \{s_1\cdots s_c\}$. Let $\mathcal{V}^{'}$ be an arbitrary feasible solution of Lemma \ref{lemma:loc_BLL} and $\mathbf{G}[V^{'}]$ be the induced subgraph of $\mathcal{V}^{'}$. $s_1$ is removed because $s_1$ has not three disjoint paths to anchors. There must be that $s_1$ cannot find three disjoint paths in $\mathbf{G}[V^{'}]$ since $\mathbf{G}[V^{'}]$ is a subgraph of $\mathcal{G}$. i.e., $s_1$ will not be included by any other feasible solution. Similarly, every node of $\mathcal{F}\setminus\mathcal{F}^{*}$ is not included by any feasible solution. Additionally, every node of $\mathcal{F}^{*}$ satisfies Theorem~\ref{theorem:sufficient_BLL}, otherwise it will be removed. Thus, Algorithm \ref{alg:BLL_Detection} selects all the vertices that can satisfy Theorem~\ref{theorem:sufficient_BLL}.
\end{proof}

\subsubsection{Time Complexity}

\begin{theorem}
	For a graph $\mathcal{G=\{V,E\}}$, let $|\mathcal{V}|=n$, $|\mathcal{E}|=m$ and the number of neighbors bounded by $\Delta$, i.e., $|\mathcal{N}_i|\leq\Delta,~\forall v_i\in\mathcal{V}$. The complexity of the IMF algorithm is $O(kn(\Delta^3+m))$, where $k$ is the number of iterations.
	\label{lemma:time_complexity_IMF}
\end{theorem}
\begin{proof}
	In each round of IMF, the construction of $\mathcal{G^A}$ needs a complexity of $O(n\Delta^3)$ since it needs to check the combination of three neighbors for each node. Then, the construction of $\mathcal{G^F}$ checks each edge in $\mathcal{E}$ to assign the capacities, requesting a complexity of $O(m)$. In calculating the max flow, there exist many efficient algorithms\cite{goldberg2014efficient}. In this article, we adopt Orlin's method with a complexity of $O(nm)$\cite{orlin2013max}. Let $k$ be the iteration count, the complexity of IMF is $O(kn(\Delta^3+m))$. In the worst case, only one node is detected to be with a Max-Flow less than 3 in each round and then removed, then the time complexity becomes $O(n^2(\Delta^3+m))$.
\end{proof}

\subsubsection{Space Complexity}
The major space consumption of IMF is storing $\mathcal{G^A}$ and $\mathcal{G^F}$, and the space complexity is $O(n^2)$.

\subsection{Extensions of IMF}

\subsubsection{Extension to NLL Localizability Detection}
Recall that IMF checks the  Recursive-3DP  condition in the generated graph $\mathcal{G_A}$. If we modify IMF as follows, the returned $\mathcal{F}^{*}$ becomes the set of nodes satisfying Recursive-3DP in the original graph $\mathcal{G}$.

\textbf{Modification 1 (M1):} 
\begin{itemize}
	\item Remove Line~\ref{alg:construct_g_a} and Line~\ref{alg:update_g_a} from Algorithm~\ref{alg:BLL_Detection}. 
	\item Change the construction of $\mathcal{G^F}$ in Line~\ref{alg:construct_g_f} to be based on $\mathcal{G}$.
\end{itemize}

Thus, the returned $\mathcal{F}^{*}$ becomes the set of nodes satisfying  Recursive-3DP  in $\mathcal{G}$ after \textbf{M1}.

\begin{theorem}
	A graph $\mathcal{G}$ is global rigid if it is composed of Recursive-3DP nodes.
	\label{lemma:IMF_NLL}
\end{theorem}
\begin{proof}
	If each node has three node disjoint paths to anchors, there is no binary vertex cut. Then $\mathcal{G}$ is redundant rigid and 3-connected so that the NLL localizability condition is met. 
\end{proof}
That is, IMF returns the NLL-localizable nodes if we conduct \textbf{M1}.

\subsubsection{Extension to Arbitrary Dimensions}

IMF works in any dimensions, denoted by $d$.

\textbf{Modification 2 (M2):} 
\begin{itemize}
	\item Change the threshold in Line~\ref{alg:cal_max_flow_imf} to $d+1$.
	\item Increase the number of anchors to be no less than $d+1$. 
\end{itemize}


Then, the returned nodes form a $(d+1)$P$^{+}$ graph, where each node has at least $d+1$ disjoint paths to anchors. Although the equivalence between globally rigidity and localizability has not been proved when $d>2$, IMF provides an efficient way to detect a well-structured $(d+1)$P$^{+}$ subgraph after \textbf{M2}.

\section{Evaluation}
\label{sect:evaluation}
Simulations are conducted using MATLAB R2020b to show the effectiveness of our method. First, we present the localizability detection capacity of the proposed IMF algorithm compared with the most representative localizability detection algorithm TP and the state-of-the-art algorithm TE. Then, we show the strong applicability of IMF in a variety of harsh situations. Finally, we analyze the application of IMF in BLL.



\subsection{Effectiveness of IMF in Localizability Detection}

In this section, the perception radius $R$ is varied to control the network density (assessed by average node degree). The percentage of anchor nodes (i.e., $|\mathcal{A}|/|\mathcal{V}|$) is also varied. The detection capacity of a certain algorithm is assessed by the percentage of detected localizable nodes $|\hat{\mathcal{F}}^*|/|\mathcal{V}|$ (called $\%$ \emph{of localizable} for brief), where $\hat{\mathcal{F}}^*$ is the detected localizable nodes by a certain algorithm. 

\begin{figure*}[t!]
	\centering
	\subfigure[TP]{
		\begin{minipage}[t]{0.3\linewidth}
			\centering
			\includegraphics[width=0.99\textwidth]{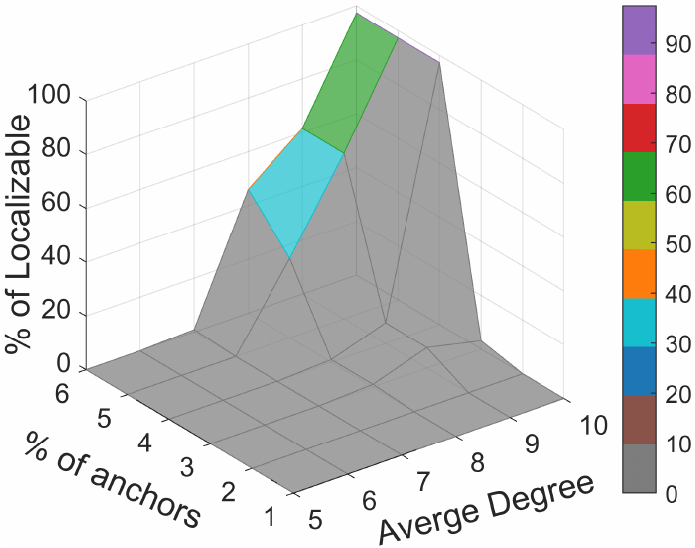}
			\label{fig:eva_tp}
		\end{minipage}
	}
	\subfigure[TE]{
		\begin{minipage}[t]{0.3\linewidth}
			\centering
			\includegraphics[width=.99\textwidth]{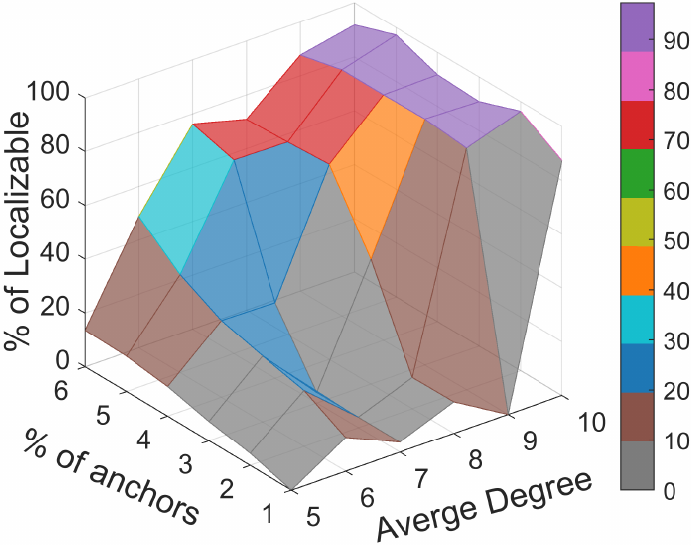}
			\label{fig:eva_te}
		\end{minipage}
	}
	\subfigure[IMF]{
		\begin{minipage}[t]{0.3\linewidth}
			\centering
			\includegraphics[width=0.99\textwidth]{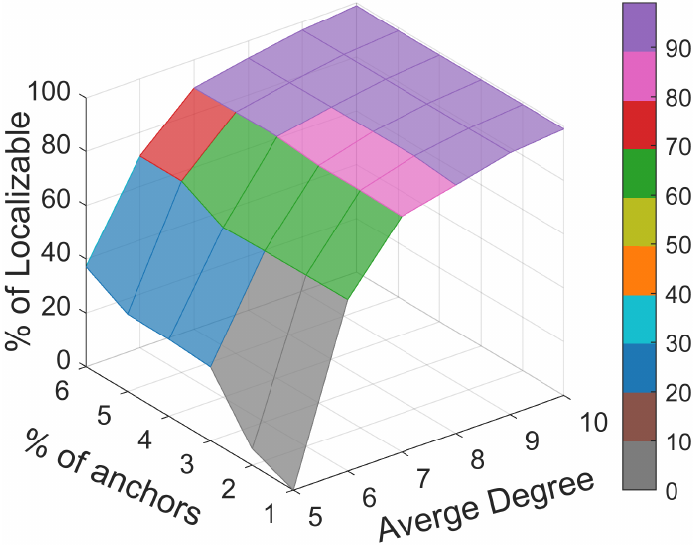}
			\label{fig:eva_imf}
		\end{minipage}
	}
	\caption{The effect of anchor density and node density on detection capacity.}
	\label{fig:eva_localizability}
\end{figure*}

Fig.~\ref{fig:eva_localizability} shows the percentage of detected localizable nodes of different algorithms under different average degrees and anchor densities. 
We can obtain the following observations:
\begin{itemize}
	\item IMF outperforms the other two algorithms in detection capacity under all settings, especially in sparse and anchor-lacking networks. 
	\item It is straightforward that detection algorithms are expected to detect more localizable nodes when there are more anchors. However, only the localizable nodes detected by IMF increase monotonically with the increase of anchor nodes under fixed network density. That is, the detection capacities of TP and TE are additionally affected by the distribution of anchors. 
\end{itemize}


\subsection{Applicability of IMF in Harsh Scenarios}
To further show the stable performance of IMF, we visualize the detection result of different algorithms in harsh scenarios, i.e., networks with irregular anchor distribution or networks with holes. 

\begin{figure}[ht!]
	\centering
	\subfigure[]{
		\begin{minipage}[t]{0.46\linewidth}
			\centering
			\includegraphics[width=.99\textwidth]{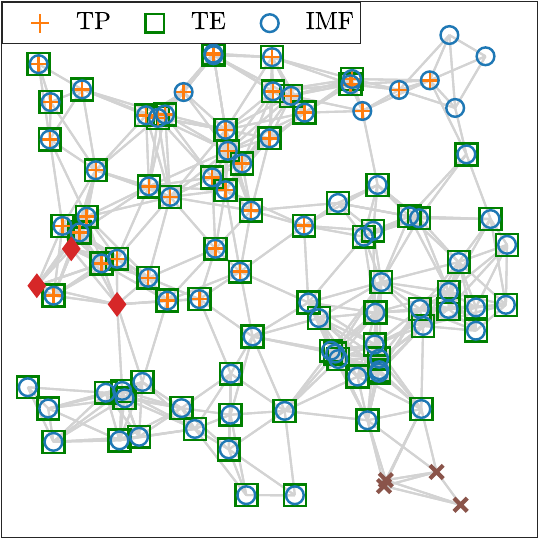}
			\label{fig:harsh_0}
		\end{minipage}
	}
	\subfigure[]{
		\begin{minipage}[t]{0.46\linewidth}
			\centering
			\includegraphics[width=.99\textwidth]{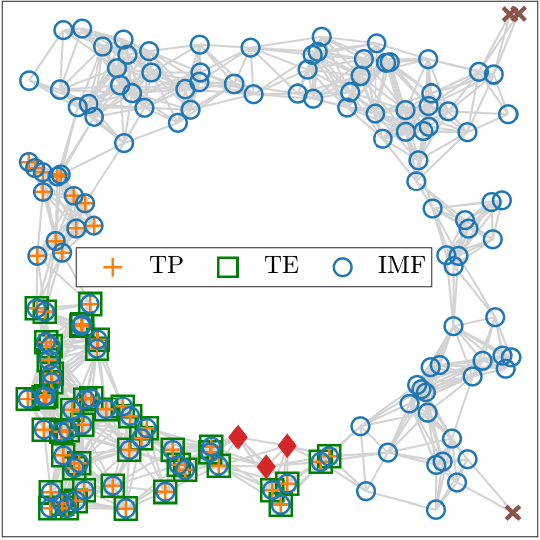}
			\label{fig:hole_0}
		\end{minipage}
	}\\
	\subfigure[]{
		\begin{minipage}[t]{0.46\linewidth}
			\centering
			\includegraphics[width=.99\textwidth]{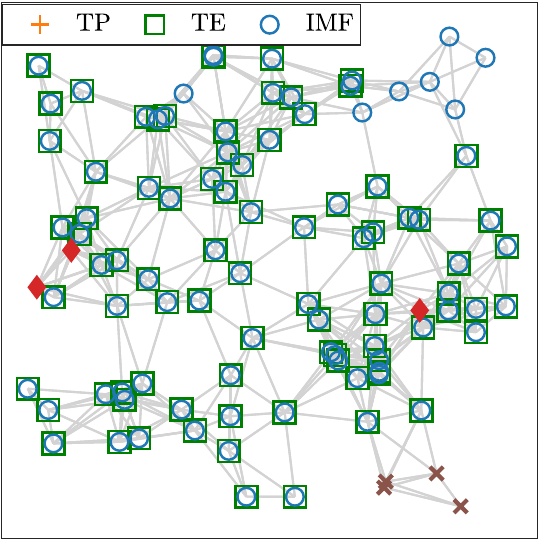}
			\label{fig:harsh_1}
		\end{minipage}
	}
	\subfigure[]{
		\begin{minipage}[t]{0.46\linewidth}
			\centering
			\includegraphics[width=.99\textwidth]{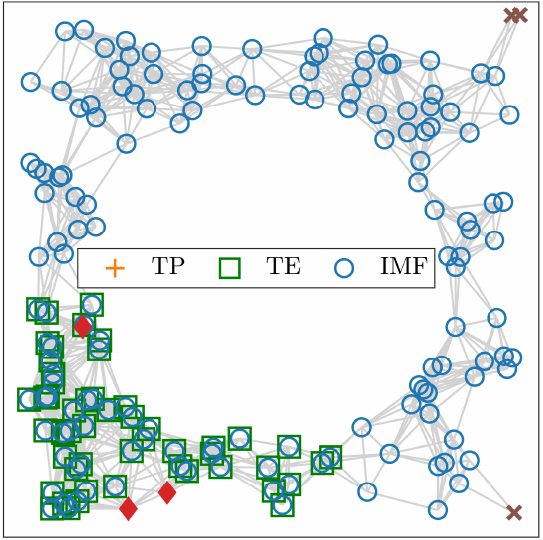}
			\label{fig:hole_1}
		\end{minipage}
	}\\
	\subfigure[]{
		\begin{minipage}[t]{0.46\linewidth}
			\centering
			\includegraphics[width=.99\textwidth]{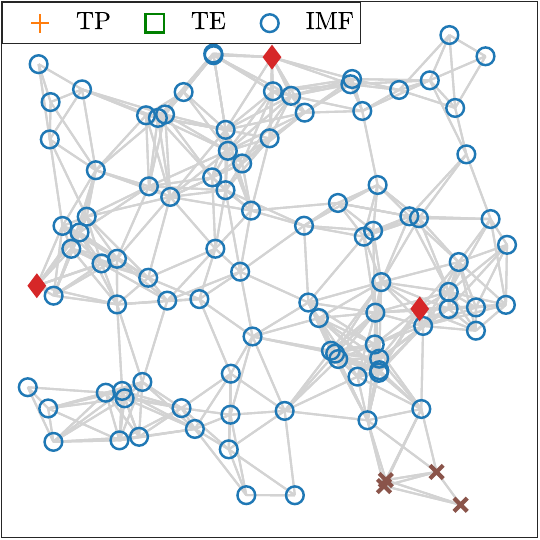}
			\label{fig:harsh_2}
		\end{minipage}
	}
	\subfigure[]{
		\begin{minipage}[t]{0.46\linewidth}
			\centering
			\includegraphics[width=.99\textwidth]{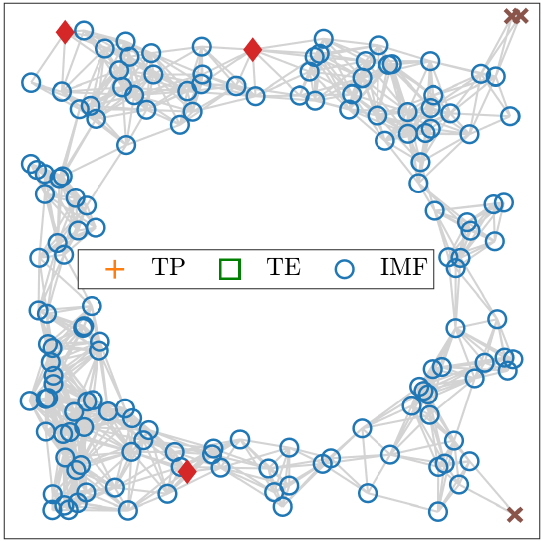}
			\label{fig:hole_2}
		\end{minipage}
	}
	\caption{The networks with different anchor distributions. Red diamonds represent anchors and brown cross markers represent the nodes that are not detected to be localizable by any algorithm. (a)-(b): 3 anchors have common neighbors; (c)-(d): 2 anchors have common neighbors; (e)-(f): no anchor nodes have common neighbors.)}
	\label{fig:show_harsh}
\end{figure}


Fig.~\ref{fig:show_harsh} shows the impact of anchor distribution. It can be seen that TP requires at least three anchors having common neighbors as in Fig.~\ref{fig:harsh_0} to startup and TE needs at least two anchors having common neighbors as in Fig.~\ref{fig:harsh_1} to startup. If the startup condition is not satisfied as in Fig.~\ref{fig:harsh_2}, TP and TE fail and detect no localizable nodes. However, IMF can detect most of the localizable nodes, even when anchors do not have any common neighbor.

In some applications, networks may be deployed with holes due to geographical obstacles or deployment requirements. To clarify the effect of holes, we first adjust the anchor distribution to ensure TP and TE can startup as in Fig.~\ref{fig:hole_0} and Fig.~\ref{fig:hole_1}. The results show that even when the startup conditions are met, TP and TE detect fewer localizable nodes than IMF. When anchors have no common neighbors, only IMF works as in Fig.~\ref{fig:hole_2}. The detailed detection results are given in Table~\ref{tab:number_localizable}.

\begin{table}[ht!]
	\caption{The number of detected localizable nodes.}
	\centering
	\begin{tabular}{p{0.5cm}|p{0.75cm}p{0.95cm}|p{0.75cm}p{0.95cm}|p{0.75cm}p{0.85cm}}
		\toprule
		~   & Fig.~\ref{fig:harsh_0} & Fig.~\ref{fig:hole_0} & Fig.~\ref{fig:harsh_1} & Fig.~\ref{fig:hole_1} & Fig.~\ref{fig:harsh_2} & Fig.~\ref{fig:hole_2} \\ \midrule
		Total & 97 & 154 & 97 & 154 & 97 & 154\\	
		TP  & 41 &  64 & 0  & 0 & 0 & 0\\
		TE & 86 &  51 &  86 & 51 & 0 & 0 \\
		\textbf{IMF} & \textbf{93} & \textbf{151}  & \textbf{93} & \textbf{151} & \textbf{93} & \textbf{151}\\
		\bottomrule
	\end{tabular}
	\label{tab:number_localizable}
\end{table}
\begin{figure*}[ht!]
	\centering
	\subfigure[Average Degree=8.]{
		\begin{minipage}[t]{0.19\linewidth}
			\centering
			\includegraphics[width=.99\textwidth]{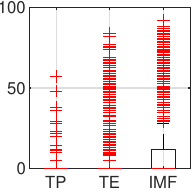}
			\label{fig:box_1}
		\end{minipage}
	}\hspace{-0.3cm}
	\subfigure[Average Degree=10.]{
		\begin{minipage}[t]{0.19\linewidth}
			\centering
			\includegraphics[width=.99\textwidth]{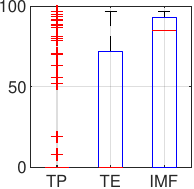}
			\label{fig:box_2}
		\end{minipage}
	}\hspace{-0.3cm}
	\subfigure[Average Degree=12.]{
		\begin{minipage}[t]{0.19\linewidth}
			\centering
			\includegraphics[width=0.99\textwidth]{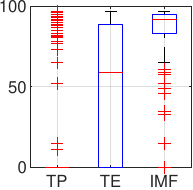}
			\label{fig:box_3}
		\end{minipage}
	}\hspace{-0.3cm}
	\subfigure[Average Degree=14.]{
		\begin{minipage}[t]{0.19\linewidth}
			\centering
			\includegraphics[width=0.99\textwidth]{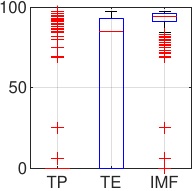}
			\label{fig:box_4}
		\end{minipage}
	}
	\hspace{-0.3cm}
	\subfigure[Average Degree=16.]{
		\begin{minipage}[t]{0.19\linewidth}
			\centering
			\includegraphics[width=0.99\textwidth]{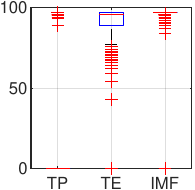}
			\label{fig:box_5}
		\end{minipage}
	}
	\caption{The distribution of detected localizable nodes under different average degrees.}
	\label{fig:box_localizability}
\end{figure*}

Moreover, Fig.~\ref{fig:box_localizability} shows the statistical results of repeated experiments. The five figures are statistical results for networks with average degrees \{8, 10, 12, 14, 16\} respectively.  In each figure, 1000 networks are generated randomly following the average node degree and run TP, TE, and IMF respectively. Each network contains 100 nodes and the distributions of the number of detected localizable nodes for TP, TE and IMF are compared using box graph. It can be observed that:
\begin{itemize}
	\item IMF always detects more localizable nodes than TP and TE, especially in sparse networks. When the average degree is 16, the median value of IMF reaches 97\%. 
	\item The median value of TP is 0 under each degree, which means that its startup condition is hard to be satisfied in random networks.
\end{itemize}

\section{Conclusion}
\label{sect:conclusion}
Determining localizability in BLL methods is crucially important due to its wide applications. In this paper, we summarize the gaps in existing theories and algorithms of BLL-localizability. To fill the knowledge gap, a sufficient condition and a necessary condition for BLL node localizability are proposed. Based on these conditions, algorithms to test BLL network localizability and to detect BLL node localizability are designed. We theoretically analyzed our main algorithm and discussed its broad prospects. Evaluations demonstrated the validity of the proposed IMF node localizability detection algorithm. 

In future work, promising directions include deriving the sufficient and necessary condition of node localizability for both NLL and BLL, and designing reliable algorithms guaranteeing the detection of all theoretically localizable nodes, especially in higher dimensions.

\bibliographystyle{unsrt}
\bibliography{IMF}

\end{document}